\newtheorem{Theorem}{Theorem}
\newtheorem{Definition}{Definition}
\newcommand{\vect}[1]{\boldsymbol{#1}}
\newcommand{\vectornorm}[1]{\left|\left|#1\right|\right|}
\newcommand{\de}{\delta}
\newcommand{\dd}{\text{d}}
\newcommand{\vxi}{\vect{\xi}}
\newcommand{\xv}{\vect{x}}
\newcommand{\kv}{\vect{k}}
\newcommand{\vv}{\vect{v}}
\begin{document}

\title{A Fractional Diffusion Equation for an $n$-Dimensional Correlated L\'{e}vy Walk}
\author{Jake P. Taylor-King$^{1,2}$, Rainer Klages$^{3,4}$, Sergei Fedotov$^{5}$, and
Robert A. Van Gorder$^1$}
\affiliation{$^1$Mathematical Institute, University of Oxford, Oxford, OX2 6GG, UK\\ $^2$Department of Integrated Mathematical Oncology, H. Lee Moffitt Cancer Center and Research Institute, Tampa, FL, USA\\
$^3$Max Planck Institute for the Physics of Complex Systems, N\"othnitzer Str.\ 38, D-01187 Dresden, Germany\\
$^4$School of Mathematical Sciences, Queen Mary University of London, Mile End Road, London E1 4NS, United Kingdom\\ $^5$School of Mathematics, The University of Manchester, Manchester, M13 9PL, United Kingdom}

\begin{abstract}
  L\'{e}vy walks define a fundamental concept in random walk theory
  which allows one to model diffusive spreading that is faster than
  Brownian motion. They have many applications across different
  disciplines. However, so far the derivation of a diffusion equation
  for an $n$-dimensional correlated L\'{e}vy walk remained elusive.
  Starting from a fractional Klein-Kramers equation here we use a
  moment method combined with a Cattaneo approximation to derive a
  fractional diffusion equation for superdiffusive short-range
  auto-correlated L\'{e}vy walks in the large time limit, and solve
  it.  Our derivation discloses different dynamical mechanisms leading
  to correlated L\'{e}vy walk diffusion in terms of quantities that
  can be measured experimentally.
\end{abstract}

\maketitle

\section{Introduction}
For most of the last century diffusive processes were understood in
terms of Brownian motion, which describes the random-looking
flickering of a tracer particle in a fluid. This type of stochastic
dynamics is characterized by Gaussian probability density functions
(PDFs) for both the position $\vect{x}$ and the velocity $\vect{v}$ of
a moving particle by generating a mean square displacement (MSD) of an
ensemble of particles that increases linearly for large times,
$\langle \vect{x}^2 \rangle\sim t$ \cite{vanKampen_1992,Risken_1996}.
The Brownian paradigm was challenged over the past few decades due to
more refined measurement techniques reporting anomalous diffusion,
where the MSD increases nonlinearly in time, $\langle x^2 \rangle\sim
t^{\nu}$ with $\nu\neq1$ \cite{Metzler_2000,MeKl04,Klages_2008}.
Subdiffusion with an exponent $\nu<1$ has especially been found for
motion in crowded environments \cite{MeSo15}, superdiffusion with
$\nu>1$ was observed, e.g., for chaotic transport of tracer particles
in turbulent flows \cite{SWS93} as well as for foraging biological
organisms \cite{VLRS11}. The easiest way to model spreading faster
than Brownian motion is in terms of \emph{L\'{e}vy flights}
\cite{KSZ96,Klages_2008,ZDK15}: Here the step length $\ell$ is a
random variable drawn independently and identically distributed from a
fat-tailed L\'{e}vy stable PDF characterized by power law tails,
$f(\ell)\sim \ell^{-1-\xi}$ with $0<\xi<2$. Correspondingly, L\'{e}vy
flights feature infinite propagation speeds and diverging MSDs. This
motivated the formulation of \emph{L\'{e}vy walks} (LWs)
\cite{KSZ96,ZDK15}, where a particle follows straight line
trajectories under the constraint of finite velocities by
stochastically reorientating itself (possibly with intermittent
resting phases) before repeating the process.  They became an
important concept for modeling a wide range of physical processes
\cite{ZDK15}. LWs belong to the class of (generalised) \emph{velocity
  jump processes} (VJPs): Central to their description is the running
time distribution, specified by a PDF, which describes how long a
particle moves in one direction before undergoing a stochastic
reorientation event. LWs are obtained by choosing a L\'{e}vy stable
running time PDF coupled to a corresponding step length PDF by finite
velocities, which generates superdiffusion with finite MSDs.  For VJPs
where the running time PDF is exponentially distributed, in which case
the process is memoryless, or Markov, one recovers the case of normal
diffusion with a MSD that increases linearly for large times.  These
two basic VJPs are special cases of the more general class of
Continuous Time Random Walks \cite{Metzler_2000,KlSo11}.

For a Brownian particle the PDFs of position and velocity can easily
be calculated by solving the standard diffusion equation, i.e., Fick's
Second Law, and the corresponding Fokker--Planck equation. These two
equations arise as special cases of the Klein--Kramers equation, which
is a Fokker--Planck equation both in position and velocity space
\cite{vanKampen_1992,Risken_1996}. Deriving corresponding equations
for anomalous diffusion led to fractional differential equations,
where non-integer derivatives are used based on power law
repositioning kernels with infinite second moments \cite{SKB02}.
While for subdiffusion fractional diffusion equations have been
derived based on subordination or Continuous Time Random Walk theory
\cite{Metzler_2000,KlSo11}, this problem turned out to be much more
non-trivial for L\'{e}vy walkers due to the spatio--temporal coupling
imposing finite velocities \cite{ZDK15}. Only very recently progress
was made by deriving an integro--differential wave equation for a
one-dimensional LW \cite{Fed15}. More generally, in position and
velocity space a fractional Klein--Kramers equation containing an
$n$-dimensional correlated LW as a special case was given in
\cite{Friedrich_2006, Friedrich_2006_2} without resting phases, and in
\cite{TaylorKing_2015} when resting phases are included. An open
question, however, is how to extract a fractional diffusion equation
for LWs from such a generalised Klein--Kramers equation. A key problem
for establishing a relationship between LWs and a fractional diffusion
equation is that generally a variety of fractional Laplace operators
is available, and the correct choice is not obvious. Obtaining such an
equation enables one to analytically solve first passage and arrival
time problems, which is relevant to study search problems for physical
and biological processes \cite{BLMV11}.

In this paper we derive such an equation for a superdiffusive LW by
starting from the fractional Klein--Kramers equation in
Refs.~\cite{Friedrich_2006,TaylorKing_2015}, based on an expansion in
terms of moment equations. Using a Cattaneo approximation, in the
limit of large times we obtain a diffusion equation with a Dirichlet
fractional Laplacian correction term, which can be solved exactly.
That this is not merely a mathematical exercise but that our result
has profound physical meaning is demonstrated by the fact that in
previous literature exactly such equations have been written down
\emph{ad hoc} for modelling intermittent diffusive dynamics as a
superposition of Brownian motion with L\'{e}vy flights
\cite{LAM05,LKMK08,CSK11}. This dynamics has in turn been observed in
experiments measuring the movements of biological organisms
\cite{SHBB12,dJBKW13,ZoLi99,HBDF12,PMC13,Sims10,BPPMC03}.  That our
derivation includes short-range memory makes our result very relevant
to such biophysical applications.  Our systematic derivation discloses
how this type of dynamics emerges from auto-correlated LWs by yielding
exact expressions for all quantities involved in our final equation.
These expressions are in turn based on the joint PDF for the
velocities of two successive steps and the running time PDF, which can
be measured experimentally.

We proceed as follows: We first introduce generalised VJPs by briefly
reviewing previously derived equations.  We then expand the delay
kernel characterising the fractional Klein--Kramers equation defining
a LW in Laplace space for large time, before inverting the result back
to spatio--temporal variables. Next we derive and investigate the
corresponding moment equations and then close them by use of the
Cattaneo approximation. Finally we show the equivalence of the Riesz
and Dirichlet fractional Laplacians, which enables us to write down
our central result, a fractional diffusion equation modelling an
$n$-dimensional correlated LW in the large time limit.  The section
following this is devoted to numerical simulations before we summarise
and discuss our results in the final section.

\section{Generalised velocity jump equation without rests}\label{sec_math_intro}

We give a description of a generalised VJP without rests. In the case
where there is a resting phase of stochastic length included, see
Taylor-King \emph{et. al.} \cite{TaylorKing_2015}.

At time $t=0$, a biological agent chooses a direction $\vect{\theta}$ and speed $s$ at random. The agent then travels with velocity $\vv = s\vect{\theta}$ for $\tau$ units of time, where $\tau$ is also drawn from probability distribution $f_\tau$. At time $t=\tau$, the agent then instantaneously reorientates itself with a new direction and speed. The process repeats indefinitely.

This motion is governed by two primary stochastic effects. We specify
these by PDFs, as given below.
\begin{itemize}
\item[a.)] {\em Running time:} The time spent running, denoted $\tau$, is governed by the pdf $f_{\tau}(t)$, where $\int_0^\infty f_\tau (t)\dd t=1$.
\item[b.)] {\em Reorientation:} We allow velocities from one run to
  another to be correlated. We denote the velocity during the running
  phase immediately before reorientation by $\vect{v}'$ and the
  velocity immediately post-reorientation by $\vect{v}$, where
  $\vect{v}',\vect{v}\in V$, for velocity space $V\subset
  \mathds{R}^n$ in $n$ spatial dimensions. The velocity $\vect{v}$ is
  dependent on $\vect{v}'$, and is instantly selected upon entering a
  new running phase, governed by the joint pdf
  $T(\vect{v},\vect{v}')$. We assume that this reorientation pdf is
  separable, so that $T(\vect{v},\vect{v}') = g(\vect{\theta},
  \vect{\theta}')h(s, s')/s^{n-1}$ where $\vect{\theta}$ is a vector
  of length $(n-1)$ containing angles and $s = \vectornorm{\vect{v}}$
  is the speed. In two dimensions, the turning kernel is decomposed as
  follows:
\begin{itemize}
\item[i.)]  The angle distribution: $g(\theta, \theta')$, requires the normalisation $\int_0^{2\pi}g(\theta, \theta')\dd \theta = 1$.
\item[ii.)] The speed distribution: $h(s, s')$, requires the normalisation $\int_0^{\infty}h(s, s')\dd s = 1$.
\end{itemize}
\end{itemize}

In Appendix \ref{app_gill} we describe a simple Gillespie algorithm
for generating a sample path. We remark that a VJP where velocities
from one run to another are not correlated yields a conventional LW as
a special case \cite{ZDK15}.

As derived in \cite{Friedrich_2006}, the density of particles following a velocity jump process without rests is given by 
\begin{align}
&\left[\frac{\partial}{\partial t} + \vv \cdot{\nabla_{\xv}} \right] p(t,\xv ,\vv ) \nonumber\\
& = -\int_0^t \Phi_\tau(t-s) p(s,\xv  - (t-s)\vv ,\vv ) \dd s  \nonumber \\ 
&+ \int_0^t \Phi_{\tau}(t-s)\int_V T(\vv ,\vv ')p(s,\xv  - (t-s)\vv ',\vv ')\dd\vv '\dd s \, , \label{p_forward_no_rests}
\end{align}
where $\Phi_\tau(t)$ can be found implicitly by the equation
\begin{equation}\label{eq_impractical}
\frac{\dd F_\tau}{\dd t} = - \int_0^t \Phi_\tau(s)F_\tau(t-s) \dd s \, ,
\end{equation}
for $F_\tau(t) = \int^{\infty}_t f_\tau(s)\dd s$. In Appendix \ref{app_deriv} we
offer an alternative derivation, which may appeal especially to
mathematical biologists. More practically than equation
\eqref{eq_impractical}, one can use the Laplace space description
\begin{equation}\label{eq_Phi_rel}
\bar{\Phi}_\tau (\lambda) = \frac{\lambda \bar{f}_\tau  (\lambda)}{1 - \bar{f}_\tau (\lambda)} = \frac{1 - \lambda \bar{F}_\tau (\lambda)}{\bar{F}_\tau(\lambda)} \, .
\end{equation}
An interesting point to note is that for the Markov velocity jump process where $f_\tau$ is exponentially distributed with mean $\mu = \chi^{-1}$, then $\Phi_\tau$ manifests as a constant rate parameter, so
\begin{align}\label{eq_f_exp_const_Phi_1}
f_\tau (t) = \chi e^{-\chi t} \, \iff \, \bar{f}_\tau (\lambda) = \frac{\chi}{\lambda + \chi} \, , 
\end{align}
and therefore
\begin{align}\label{eq_f_exp_const_Phi_2}
   \bar{\Phi}_\tau (\lambda) = \chi \,\iff \, \Phi_\tau (t) = \chi \delta (t) \, .
\end{align}

\section{Delay kernel behaviour}\label{sec_delay_kernel}
For us to consider the large time behaviour of the velocity jump process, we wish to explore equation \eqref{p_forward_no_rests} for large $t$. Conveniently in Laplace space, large time $t$ corresponds to small Laplace variable $\lambda$. We now study the form of delay kernel $\Phi_\tau$ given a running distribution $f_\tau$. By definition of $\Phi_\tau$ in Laplace space, we need to investigate equation \eqref{eq_Phi_rel}.

In the case where the mean and variance of the running distribution is finite, one can simply Taylor expand the underlying distribution $f_\tau$ and delay kernel $\Phi_\tau$ in Laplace space. When either of the first two moments are undefined, one must be more careful and rely on asymptotic expansions. The case where both the mean and variance are finite is explored in \cite{TaylorKing_2015}. In this paper we review this case, and present analysis for when the variance is infinite. The case when both the mean and variance of the running distribution is infinite is more difficult and will be investigated in future work. We expand the running distribution $f_\tau$ in Laplace space including terms up to quadratic order.

\subsection{Finite mean and variance}\label{sec_fin_mean_fin_var}
If the first two moments of $f_\tau$ are defined then
\begin{eqnarray}
\bar{f}_\tau (\lambda)  &=& 1 - \langle \tau \rangle \lambda + \frac{1}{2}  \langle \tau^2 \rangle \lambda^2 - ...   \\
&=& 1 - \mu \lambda + \frac{1}{2} \left( \sigma^2 + \mu^2 \right)\lambda^2 - ... \, ,
\end{eqnarray}
where $\mu = \mathds{E}(\tau) = \langle \tau \rangle$, and $\sigma^2 = \mathds{E}(\tau^2) - [\mathds{E}(\tau) ]^2 = \langle \tau^2 \rangle - \langle \tau \rangle^2$. Taylor expanding $\Phi$, we find that
\begin{eqnarray}\label{eq_Phi_Taylor_ex}
\bar{\Phi}(\lambda) &=& \bar{\Phi}(0) + \bar{\Phi}'(0)\lambda + \frac{1}{2}\bar{\Phi}''(0)\lambda^2 + ... \text{ as }\lambda\to0 \, ,\\
&=& \frac{1}{\mu} +  \frac{1}{2}\left(\frac{\sigma^2}{\mu^2}-1 \right)\lambda + \mathcal{O} \left(  \lambda^2 \right) \text{ as }\lambda\to0 \, . \label{eq_Phi_finite_mean_var}
\end{eqnarray}
Note that equation \eqref{eq_Phi_finite_mean_var} is consistent with equations \eqref{eq_f_exp_const_Phi_1}--\eqref{eq_f_exp_const_Phi_2} as when $f_\tau$ is exponentially distributed $\mu^2 = \sigma^2$. This case was examined in detail in earlier work \cite{TaylorKing_2015}, and eventually leads to a diffusion equation in the large time limit.

\subsection{Finite mean, infinite variance}\label{sec_fin_mean_inf_var_1}
If only the first moment is defined (i.e. finite mean, infinite variance), then we observe a fat tailed distribution of the form
\begin{equation}
f_\tau (t) \sim t^{-2- \alpha} \text{ as }t\to\infty \, ,
\end{equation}
for $\alpha\in(0,1]$. In Laplace space, this gives the expansion \cite{Portillo_2011}
\begin{equation}\label{eq_f_tau_Laplace_exp_fin_mean}
\bar{f}_\tau (\lambda)= 1 - \langle \tau \rangle \lambda + \gamma \lambda^{1 + \alpha} - ...  =  1 - \mu \lambda + \gamma  \lambda^{1 + \alpha} - ... 
\end{equation}
as $\lambda\to0$, where $\mu$ and $\gamma$ will depend on the parameters of the distribution $f_\tau (t)$. In Appendix \ref{app_Pareto}, we give an example expansion of $\bar{f}_\tau$ for the Pareto distribution.

Using equation \eqref{eq_Phi_rel} in conjunction with the expansion given by equation \eqref{eq_f_tau_Laplace_exp_fin_mean}, we find that
\begin{equation}
\bar{\Phi}(\lambda) \sim \frac{\lambda (1 - \mu \lambda +\gamma  \lambda^{1 + \alpha} )}{  \mu \lambda - \gamma  \lambda^{1 + \alpha}} \sim \frac{1 - \mu \lambda +\gamma  \lambda^{1 + \alpha} }{  \mu - \gamma  \lambda^{ \alpha}} \, .
\end{equation}
Noting the Geometric expansion 
\begin{equation}
\frac{1}{ \mu - \gamma  \lambda^{ \alpha} } = \frac{1}{\mu} \sum_{n=0}^{\infty} \left( \frac{ \gamma \lambda^{ \alpha}   }{\mu} \right)^n = \frac{1}{\mu} +  \frac{ \gamma \lambda^{ \alpha}   }{\mu^2} +  \frac{( \gamma \lambda^{ \alpha} )^2  }{\mu^3} ...\, ,
\end{equation}
for $| \lambda^{ \alpha} | < \mu / \gamma$, therefore
\begin{equation}\label{eq_Phi_finite_mean_inf_var}
\bar{\Phi}(\lambda) \sim \frac{1}{\mu} +  \frac{ \gamma \lambda^{ \alpha}   }{\mu^2} - \lambda + \mathcal{O} \left(  \lambda ^{ \min\{ {1  + \alpha },\, {2\alpha } \} } \right)  \text{ as }\lambda\to0 \, .
\end{equation}
If $\alpha =1$ and $\gamma = (\sigma^2 + \mu^2)/2$, then equation \eqref{eq_Phi_finite_mean_inf_var} is consistent with equation \eqref{eq_Phi_Taylor_ex}.

\section{Inversion back to spatio--temporal variables}\label{sec_inversion}

We wish to carry out the analysis for the velocity jump process without rests. The density is given by equation \eqref{p_forward_no_rests}. In Fourier--Laplace space, we have
\begin{align}\label{eq_p_forward_Fourier_Laplace}
& \left[\lambda + i\kv\cdot\vv  \right] \tilde{\bar{p}}(\lambda, \kv ,\vv ) - \tilde{p}_0(\kv ,\vv ) = -\bar{\Phi}_\tau (\lambda + i\kv\cdot\vv )\tilde{\bar{p}}(\lambda,\kv ,\vv )\nonumber \\ &+\bar{\Phi}_\omega(\lambda + i\kv\cdot\vv  ) \int_V T(\vv ,\vv ')\tilde{\bar{p}}(\lambda, \kv ,\vv ') \dd \vv '.
\end{align}
It is now required that we analyse the term
\begin{equation}\label{eq_form_to_eval}
\bar{\Phi}(\lambda + i\kv\cdot\vv ) \tilde{\bar{p}}(\lambda, \kv, \vv ) \, .
\end{equation}
When we have finite mean and infinite variance, we wish to evaluate the expansion given
by equation \eqref{eq_Phi_finite_mean_inf_var}. Therefore we wish to
evaluate terms of the form
\begin{equation}\label{eq_expand_fin_mean_inf_var}
\bar{\Phi}(\lambda + i\kv\cdot\vv ) \sim \left[ \frac{1}{\mu}+\frac{\gamma  (\lambda + i\kv\cdot\vv )^{\alpha} }{\mu^2}  - (\lambda + i\kv\cdot\vv )  \right]  \, .
\end{equation}
The term $(\lambda + i\kv\cdot\vv )^{\alpha}$ is a multidimensional
version of the fractional material derivative introduced in
\cite{Sokolov_2003}. Because the fundamental solution of the material
derivative equation is only defined in a weak sense
\cite{Jurlewicz_2012}, and limited analytic progress made in
dimensions higher than one \cite{Magdziarz_2015}, we avoid using this
pseudo-differential operator. If we are considering the large time
limit, then we make the ansatz $\lambda = \mathcal{O}( || \kv ||^{1 +
  \alpha} )$. In this regime we can then use the Binomial Theorem to
simplify equation \eqref{eq_expand_fin_mean_inf_var} if we specify
that $|| \vv || \sim 1$. We obtain the term
\begin{equation}\label{eq_BinThm_expansion}
\left[ \frac{1}{\mu}+\frac{\gamma  ( [i\kv\cdot\vv ]^{\alpha} + \alpha \lambda [i\kv\cdot\vv ]^{\alpha-1} + \dots ) }{\mu^2}  - (\lambda + i\kv\cdot\vv )  \right]  \, .
\end{equation}
We henceforth drop the term $\lambda[i\kv\cdot\vv ]^{\alpha-1}$; this is because this term will only be large in the small region when $|| \kv\cdot\vv || \ll 1$.  

By inserting the expression given in equation \eqref{eq_BinThm_expansion} into \eqref{eq_expand_fin_mean_inf_var}, putting it into \eqref{eq_p_forward_Fourier_Laplace} and then inverting into spatio--temporal variables, we can write down equation \eqref{p_forward_no_rests} with the delay kernel that relates to a running distribution with finite mean and infinite variance
\begin{equation}
\begin{aligned}
& \left[\frac{\partial}{\partial t} + \vv \cdot{\nabla_{\xv }} \right]  p(t,\xv ,\vv ) =  \\
& - \left[ \frac{1}{\mu}+\frac{\gamma }{\mu^2}  (\vv \cdot\nabla_{\xv })^{\alpha}  -  \left( \frac{\partial}{\partial t}  + \vv \cdot\nabla_{\xv } \right)  \right] p(t, \xv , \vv )   \\ 
&+ \int_V T(\vv ,\vv ')  \left[ \frac{1}{\mu}+\frac{\gamma }{\mu^2} (\vv '\cdot\nabla_{\xv })^{\alpha} -  \left( \frac{\partial}{\partial t}  + \vv '\cdot\nabla_{\xv } \right)  \right] \\
& \qquad \qquad\qquad \qquad\qquad \qquad \times p (t, \xv , \vv ') \dd\vv ' \, .
\end{aligned}
\end{equation}

\section{Moment equations}\label{sec_moment_eqs}
When integrating over the velocity space, we generate an equation for the conservation of mass; this equation refers to the flux of the momentum. More generally, by considering successively greater monomial moments in the velocity space, one obtains a system of $k$ equations where the equation for the time evolution of moment $k$ corresponds to the flux of moment $k+1$. It therefore becomes necessary to `close' the system of equations to create something mathematically tractable. We define the notation for the first three moments as
\begin{align}
m^0 = &\int_V  p (t, \xv , \vv ) \dd \vv \, ,\, \vect{m}^1= \int_V \vv  p (t, \xv , \vv ) \dd \vv \, , \nonumber \\
& \text{ and } M^2 = \int_V \vv \vv^T  p (t, \xv , \vv ) \dd \vv \, .
\end{align}
In order to make progress, we must first make an assumption on the turning kernel $T$. By considering that the mean post-turn velocity has the same orientation as the previous velocity, we define the index of persistence $\psi_d$ via the relation
\begin{equation}
\bar{\vect{v}}(\vect{v}') = \int_V \vect{v}T(\vect{v},\vect{v}') \dd \vect{v} = \psi_d \vect{v}'.
\end{equation}
Informally, this means that turning angles between consecutive velocities have zero mean.

Integrating over $\vv$, our equation for $m^0$ is just conservation of mass
\begin{equation}
\frac{\partial m^0}{\partial t} + {\nabla_{\xv }}\cdot \vect{m}^1 = 0  \, .
\end{equation}
For $\vect{m}^1$, our equation becomes
\begin{equation}\label{fin_mean_inf_var_before_Cat}\begin{aligned}
& \psi_d \left[\frac{\partial \vect{m}^1}{\partial t} + \nabla_{\xv}\cdot M^2\right] \\
& = {-(1 - \psi_d )} \left[\frac{\vect{m}^1}{\mu} + \frac{\gamma  }{\mu^2}  \int_V \vv (\vv\cdot\nabla_{\xv})^\alpha \, p(t,\xv,\vv)\, \dd \vv   \right]  \, .
\end{aligned}\end{equation}
One option is to use the Cattaneo approximation to make progress.

\section{Cattaneo approximation step}\label{sec_cattaneo}

We need to explore our options for methods to close the velocity space. In the velocity jump literature, arguably the most cogent method is the Cattaneo approximation popularised by Hillen \cite{Hillen_2003, Hillen_2004, TaylorKing_2015}. 

For the case where the speed distribution is independent of the previous running step, i.e., $h(s,s') = h(s)$, we approximate $M^2$ by the second moment of some function $u_\text{min} = u_\text{min} (t,\xv ,\vv )$, such that $u_\text{min}$ has the same first two moments as $p = p(t, \xv , \vv )$ and is minimised in the $L^2(V)$ norm weighted by $h(s)/s^{n-1}$. This is essentially minimising oscillations in the velocity space whilst simultaneously weighting down speeds that would be unlikely to occur \cite{Hillen_2003}. The Cattaneo approximation is particularly valid for large times, as it assumes that any initial data in the velocity space has been smoothed out.

We introduce Lagrangian multipliers $\Lambda^0 = \Lambda^0(t, \xv )$ and $\vect{\Lambda}^1 =\vect{\Lambda}^1(t, \xv )$ and then define
\begin{equation}\begin{aligned}
H(u) := & \frac{1}{2} \int_V \frac{u^2}{h(s)/s^{n-1}}\dd \vv  - \Lambda^0\left(\int_V u \dd \vv  - m^0 \right) \\ & \qquad  - \vect{\Lambda}^1\cdot \left(\int_V\vv  u \dd \vv  - \vect{m}^1 \right) \, .
\end{aligned}\end{equation}
By the Euler--Lagrange equation \cite{Gregory}, we can minimise $H(u)$ to find that
\begin{equation}
u(t, \xv , \vv )  = \frac{\Lambda^0(t, \xv ) h(s)}{s^{n-1}} + \frac{(\vect{\Lambda}^1(t, \xv ) \cdot \vv ) h(s)}{s^{n-1}} \, .
\end{equation}
We now use the constraints to find $\Lambda^0$ and $\vect{\Lambda}^1$. For $m^0$ we have
\begin{equation}
m^0 = \int_V u\, \dd \vv  =  \Lambda^0 \int_V  \frac{h(s)}{s^{n-1}} \dd \vv   = \Lambda^0 A_{n-1} \, ,
\end{equation}
where $A_n = \text{Area}( \mathds{S}^{n} )$ and $\mathds{S}^{n} = \{\xv \in\mathds{R}^{n+1}:\vectornorm{\xv } = 1\}$ is the hollow $n$-sphere centred at the origin. Notice also that the $\int_V \vv h(s)/s^{n-1} \dd \vv  = \vect{0}$ by symmetry. For the first moment, we calculate
\begin{equation}
\vect{m}^1 = \int_V \vv\,  u\, \dd \vv  =   \vect{\Lambda}^1 \cdot \int_V   \frac{h(s) \vv \vv ^T}{s^{n-1}} \dd \vv   = S_h^2 V_n \vect{\Lambda}^1  \, ,
\end{equation}
where $S_h^{\beta} = \int_0^\infty s^{\beta} h(s) \dd s$ is the $\beta^{\text{th}}$ moment of the speed distribution $h$, and $V_n = \text{Vol}(\mathds{V}^{n})$ where $\mathds{V}^n$ is the closure of $\mathds{S}^{n-1}$, i.e., the filled unit ball around the origin. Therefore, we can stipulate the form for $u_{\text{min}}$ as
\begin{equation}\label{eq_u_min_form}
u_{\text{min}}(t, \xv , \vv )  = \frac{m^0 (t, \xv ) h(s)}{s^{n-1}A_{n-1}} + \frac{(\vect{m}^1 (t, \xv ) \cdot \vv ) h(s)}{S_h^2 s^{n-1}V_n}   \, .
\end{equation}

We now approximate the second moment of $p$ by the second moment of $u_{\text{min}}$. Noting that $ {A_{n-1}} / {V_n}  = n$,
\begin{equation}
M^2(u_{\text{min}}) = \int_V \vv \vv ^T u_{\text{min}}(t,\xv ,\vv )\dd \vv  =  \frac{S_h^2}{n}I_n m^0 (t, \xv ) \,  .
\end{equation}
In the non-fractional case, this allows us to close the set of moment equations by the approximation $\nabla_{\xv } \cdot M^2 \approx \frac{S_h^2}{n}\nabla_{\xv }m^0$. 

In the fractional case, one obtains terms of the form $\int_V \vv (\vv
\cdot \nabla_{\xv})^\beta p\, \dd \vv$, to which one must evaluate.
We do so in Appendix \ref{app_riesz} by defining the Riesz
  derivative, which is needed to perform these evaluations.

\section{Effective fractional diffusion equation}\label{sec_eff_frac_diff_eq}

For $\vect{m}^1$, using the Cattaneo approximation, equation \eqref{fin_mean_inf_var_before_Cat} becomes
\begin{equation}\begin{aligned}
& -\frac{\psi_d}{1 - \psi_d} \left[\frac{\partial \vect{m}^1}{\partial t} + \frac{S_h^2}{n} {\nabla_{\xv }} m^0\right] \\
& \qquad = \frac{\vect{m}^1}{\mu} + \frac{\gamma  }{\mu^2}   \left( \frac{ S_h^{1 + \alpha }}{ A_{n-1}  }  \nabla_{M}^{\alpha} m^0 + \frac{ S_h^{2 + \alpha  } }{V_n S_h^2 } J^{\alpha}_M \vect{m}^1 \right)     \, .
\end{aligned}\end{equation}
We can eliminate for $m^0$ to obtain
\begin{equation}\begin{aligned}
&\frac{\mu \psi_d}{(1 - \psi_d )}  \underbrace{ \frac{\partial^2 m^0}{\partial t^2} }_{\mathcal{O}(\lambda^2)} + \underbrace{ \frac{\partial m^0}{\partial t} }_{\mathcal{O}(\lambda)}  =   \frac{\mu \psi_d}{(1 - \psi_d )}\frac{S_h^2}{n} \underbrace{ {\nabla_{\xv }^2} m^0 }_{\mathcal{O}(|| \vect{k} ||^2)} \\
& \qquad\qquad+   \frac{\gamma  S_h^{\alpha + 1 }  }{\mu   A_{n-1} }  \underbrace{ \mathds{D}^{1 + \alpha}_M  m^0 }_{\mathcal{O}(|| \vect{k} ||^{1 + \alpha})} +  \frac{\gamma   S_h^{2 + \alpha  }  }{\mu V_n S_h^2}  \underbrace{ \mathds{D}^{1 + \alpha}_M \vect{m}^1 }_{\mathcal{O}(\lambda || \vect{k} ||^{1 + \alpha})}  \, .
\end{aligned}\end{equation}
We are using the scaling $\lambda = \mathcal{O}(|| \kv ||^{1 + \alpha} )$ and we can ignore higher order terms. Therefore, our equation becomes
\begin{equation}\label{eq_penultimate_finite_mean}
 \frac{\partial m^0}{\partial t}  =   \frac{\mu \psi_d}{(1 - \psi_d )}\frac{S_h^2}{n}  {\nabla_{\xv }^2} m^0  +   \frac{\gamma S_h^{\alpha + 1 }  }{\mu   A_{n-1} }   \mathds{D}^{1 + \alpha}_M  m^0 \, .
\end{equation}
In order to foster tractability, we can relate fractional derivatives via the following theorem.

\begin{figure}
\begin{center}
\includegraphics[scale=.3]{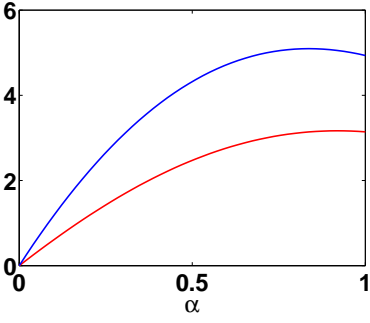}
  \end{center}
  \caption{{Plot of the function $\Upsilon_d(\alpha)$ for $0<\alpha<1$ with $d=2$ (red) and $d=3$ (blue). \label{fig_I_d_alph_plot}}}
\end{figure}

\begin{Theorem}\label{Thm_relate_frac_deriv}
The Riesz fractional derivative given by equation \eqref{eq_Riesz_laplacian} can be related to the Dirichlet fractional Laplacian for mixing measure $M(\dd \vect{\theta} ) = \dd \vect{\theta}$ via the relationship
\begin{equation}
\mathds{D}^{1 + \alpha}_M f(\xv) \equiv \Upsilon_d(\alpha) \Delta^{\frac{1 + \alpha}{2}}_{\xv} f(\xv) \, ,
\end{equation}
where the Dirichlet fractional Laplacian is defined via the Fourier transform as
\begin{equation}\label{eq_def_frac_Laplace}
\mathscr{F} \left( \Delta^{\beta}_{\xv} f(\xv) \right) = - || \kv ||^{2\beta} f(\kv) \, ,
\end{equation}
and
\begin{equation}\label{eq_I_d_alpha}
 \Upsilon_d(\alpha) = 
\left\{ \begin{array}{cc}
2\sqrt{\pi} \sin (\frac{\alpha \pi}{2}) \frac{ \Gamma (\frac{2 + \alpha}{2})}{\Gamma (\frac{3 + \alpha}{2})} & \quad d = 2 \\
4 \pi^{3/2} \sin (\frac{\alpha \pi}{2}) \left(  \frac{ \Gamma (\frac{2 + \alpha}{2})}{\Gamma (\frac{3 + \alpha}{2})} -  \frac{ \Gamma (\frac{4 + \alpha}{2})}{\Gamma (\frac{5 + \alpha}{2})} \right) & \quad d = 3 
 \end{array} \right. \, .
\end{equation}
Crucially, $\Upsilon_d(\alpha)$ depends on the dimension $d$, is a non-monotonic function of $\alpha$, and therefore has a maximum or minimum $\alpha^*_d$ where $\frac{\dd }{\dd \alpha} \Upsilon_d(\alpha^*_d) = 0$. An illustration of $\Upsilon_d(\alpha)$ is shown in Fig. 1 for $0<\alpha<1$ for $d=2,3$.
\end{Theorem}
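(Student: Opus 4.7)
The plan is to work in Fourier space and compare multipliers, since by \eqref{eq_def_frac_Laplace} the Dirichlet fractional Laplacian is characterised entirely by the symbol $-\|\kv\|^{1+\alpha}$. It therefore suffices to show that $\mathscr{F}(\mathds{D}^{1+\alpha}_M f)(\kv) = -\Upsilon_d(\alpha)\|\kv\|^{1+\alpha}\hat{f}(\kv)$ with $\Upsilon_d$ as claimed in \eqref{eq_I_d_alpha}. Starting from the Riesz directional definition in Appendix \ref{app_riesz}, each factor $(\vect{\theta}\cdot\nabla_{\xv})^{1+\alpha}$ transforms into the Fourier multiplier $(i\vect{\theta}\cdot\kv)^{1+\alpha}$, so the symbol of $\mathds{D}^{1+\alpha}_M$ reduces to a spherical integral over $\mS^{d-1}$ against the uniform mixing measure $M(\dd\vect{\theta})=\dd\vect{\theta}$, possibly with vectorial factors $\vect{\theta}$ or $\vect{\theta}\otimes\vect{\theta}$ inherited from the original moment integral $\int_V \vv(\vv\cdot\nabla_{\xv})^\alpha p\,\dd\vv$ in \eqref{fin_mean_inf_var_before_Cat}.

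Next I would exploit the rotational invariance of $M$ to align coordinates with $\kv$, factoring out $\|\kv\|^{1+\alpha}$ and reducing the problem to one-dimensional integrals of the form $\int_{\mS^{d-1}}(i\vect{\theta}\cdot\hat{\kv})^{1+\alpha}\,\dd\vect{\theta}$ (together with their vectorial analogues). The complex power is handled by the principal-value convention: splitting the sphere into the hemispheres $\{\vect{\theta}\cdot\hat{\kv}>0\}$ and $\{\vect{\theta}\cdot\hat{\kv}<0\}$, the phases combine as $e^{i\pi(1+\alpha)/2} + e^{-i\pi(1+\alpha)/2} = 2\cos\bigl(\tfrac{\pi(1+\alpha)}{2}\bigr) = -2\sin\bigl(\tfrac{\pi\alpha}{2}\bigr)$. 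This single computation supplies both the overall minus sign required to match the Dirichlet multiplier and the explicit $\sin(\alpha\pi/2)$ prefactor in \eqref{eq_I_d_alpha}.

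The remaining real integrals $\int_{\mS^{d-1}}|\vect{\theta}\cdot\hat{\kv}|^{1+\alpha}\,\dd\vect{\theta}$ are then evaluated by standard spherical parametrisations and the Beta-function identity $\int_0^{\pi/2}\cos^\nu\phi\,\dd\phi = \tfrac{\sqrt{\pi}}{2}\,\Gamma\bigl(\tfrac{\nu+1}{2}\bigr)/\Gamma\bigl(\tfrac{\nu+2}{2}\bigr)$. For $d=2$, writing $\vect{\theta}=(\cos\phi,\sin\phi)$ produces directly the single Gamma-ratio in the first line of \eqref{eq_I_d_alpha}. For $d=3$, using $\vect{\theta}=(\sin\psi\cos\phi,\sin\psi\sin\phi,\cos\psi)$ and treating the vectorial contributions to the integrand separately yields two distinct polar integrals whose combination produces the difference of Gamma-ratios appearing in the second line of \eqref{eq_I_d_alpha}. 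Finally, non-monotonicity of $\Upsilon_d$ on $(0,1)$ follows from the product structure: $\sin(\pi\alpha/2)$ vanishes at $\alpha=0$ and grows, while the Gamma-ratio factor is smooth, positive, and decreasing on $[0,1]$, forcing an interior extremum $\alpha^*_d$ with $\frac{\dd}{\dd\alpha}\Upsilon_d(\alpha^*_d)=0$, consistent with Fig.~1.

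The main obstacle is twofold. The first delicate point is handling the branch of $(i\vect{\theta}\cdot\kv)^{1+\alpha}$ consistently across the sphere so that the resulting symbol is purely real and negative; a careless choice of branch would leave a spurious imaginary part. The second, more substantive difficulty is pinning down the precise vectorial structure of $\mathds{D}^{1+\alpha}_M$ from Appendix \ref{app_riesz}, which must account for why $d=3$ produces two Gamma-ratios where $d=2$ produces only one; I expect this to trace back to a $\vect{\theta}\otimes\vect{\theta}$ weighting in the definition inherited from the Cattaneo closure of $\int_V \vv(\vv\cdot\nabla_{\xv})^\alpha p\,\dd\vv$, with the angular integration against $\theta_i\theta_j$ producing the additional $\Gamma\bigl(\tfrac{4+\alpha}{2}\bigr)/\Gamma\bigl(\tfrac{5+\alpha}{2}\bigr)$ term in $d=3$ that is trivial in $d=2$.
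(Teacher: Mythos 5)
Your treatment of the two-dimensional case is essentially identical to the paper's proof in Appendix E: pass to Fourier space, reduce the symbol to $\int_{\|\vect{\theta}\|=1}(i\vect{\theta}\cdot\kv)^{1+\alpha}\,\dd\vect{\theta}$, split the circle according to the sign of $\vect{\theta}\cdot\hat{\kv}$ so that the phases combine as $i^{1+\alpha}+(-i)^{1+\alpha}=-2\sin(\tfrac{\alpha\pi}{2})$, and evaluate the remaining real integral with the Beta-function identity $\int_0^{\pi/2}\cos^{1+\alpha}\eta\,\dd\eta=\tfrac{\sqrt{\pi}}{2}\Gamma(\tfrac{2+\alpha}{2})/\Gamma(\tfrac{3+\alpha}{2})$. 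That part is correct and complete.

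The genuine gap is the three-dimensional case, and you have in fact located it yourself but proposed the wrong repair. The operator in Theorem \ref{Thm_relate_frac_deriv} is the \emph{scalar} one of equation \eqref{eq_Riesz_laplacian}, $\mathds{D}^{1+\alpha}_M f=\int_{\|\vect{\theta}\|=1}(\vect{\theta}\cdot\nabla_{\xv})^{1+\alpha}f\,M(\dd\vect{\theta})$; there is no $\vect{\theta}\otimes\vect{\theta}$ weighting in it, so your conjectured mechanism for the second Gamma-ratio contradicts the definition you are asked to use (the tensorial weight belongs to $J^\alpha_M$ acting on $\vect{m}^1$, a term that is discarded before Theorem \ref{Thm_relate_frac_deriv} is invoked). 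If you carry out the scalar computation exactly as you set it up — align $\hat{\kv}$ with the pole and integrate over $\mS^2$ with surface measure $\sin\psi\,\dd\psi\,\dd\phi$ — you obtain a \emph{single} integral $\int_0^{\pi/2}\cos^{1+\alpha}\psi\,\sin\psi\,\dd\psi=\tfrac{1}{2+\alpha}$ and hence $\Upsilon_3(\alpha)=4\pi\sin(\tfrac{\alpha\pi}{2})/(2+\alpha)$, not the difference of Gamma-ratios in \eqref{eq_I_d_alpha}; that difference equals $\tfrac{2}{\sqrt{\pi}}\int_0^{\pi/2}\cos^{1+\alpha}\psi\,\sin^2\psi\,\dd\psi$, i.e.\ it carries an extra factor of $\sin\psi$ whose provenance your argument does not supply. (To be fair, the paper's own appendix is equally silent here: it only asserts that the $d=3$ integral "can be evaluated similar to that of $\Upsilon_2(\alpha)$" without displaying the computation, so your proof is no less complete than the published one on this point — but as a blind derivation it does not establish the stated $d=3$ constant.) A smaller issue: your non-monotonicity argument is too quick, since the product of an increasing function vanishing at $0$ with a positive decreasing function need not have an interior extremum; what actually forces one is that $\tfrac{\dd}{\dd\alpha}\sin(\tfrac{\alpha\pi}{2})$ vanishes at $\alpha=1$, so the decreasing Gamma-ratio dominates near $\alpha=1$ while $\Upsilon_d(0)=0$, pinning a maximum in the interior.
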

\begin{proof}
See Appendix \ref{app_proof}.
\end{proof}
We remark that one could conceivably consider another fractional
  Laplacian, depending upon the derivation and application to be
  studied. Our choice is motivated by computational tractability.

Using Theorem \ref{Thm_relate_frac_deriv}, one can rewrite equation \eqref{eq_penultimate_finite_mean} as
\begin{equation}\label{eq_final_finite_mean}
 \frac{\partial m^0}{\partial t}  =   \frac{\mu \psi_d}{(1 - \psi_d )}\frac{S_h^2}{n}  {\nabla_{\xv }^2} m^0  +   \frac{\gamma S_h^{\alpha + 1 }  }{\mu   A_{n-1} }    \Upsilon_d(\alpha) \Delta^{\frac{1 + \alpha}{2}}_{\xv}  m^0 \, .
\end{equation}
Equation \eqref{eq_final_finite_mean} is our central result. To solve this equation, we make use of the following theorem from Blumenthal and Getoor \cite{Blumenthal_1960}.

\begin{Theorem}
For the transition density $f_\beta (t,\xv)$ defined by
\begin{equation}
e^{-t || \vxi ||^{\beta}} = \int_{\mathds{R}^n} e^{-i(\xv \cdot \vxi)} f_\beta (t,\xv) \dd \xv \, ,
\end{equation}
then for $0<\beta\leq 2$, $f$ is given by the self-similarity relation $f_\beta (t,\xv) = f_\beta (1,\xv / t^{1/\beta}) / t^{n/\beta}$ and
\begin{equation}
f_\beta (1,\xv) = \frac{1}{(2\pi)^{n/2} || \xv ||^{n/2 - 1}  } \int_0^{\infty} e^{-s^{\beta}} s^{n/2} J_{\frac{n-2}{2}}(||\xv || s )\dd s\, ,
\end{equation}
where $J_n(x)$ is the $n^{\text{th}}$ Bessel Function of the First Kind. When $\beta=2$, we obtain the usual Gaussian distribution.
\end{Theorem}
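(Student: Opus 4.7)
The plan is to read the theorem as three essentially independent statements about the inverse Fourier transform of the radial function $e^{-t||\vxi||^\beta}$. Since this function lies in $L^1(\mathds{R}^n)$ for all $t>0$ and $\beta>0$, the candidate density
\begin{equation}
f_\beta(t,\xv) = \frac{1}{(2\pi)^n}\int_{\mathds{R}^n} e^{i\xv\cdot\vxi}\, e^{-t||\vxi||^\beta}\dd\vxi
\end{equation}
is well defined and continuous. (For $0<\beta\le 2$, nonnegativity---which ensures $f_\beta$ is genuinely a density---is classical but incidental to the statement itself.)

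For the self-similarity relation, I would rescale $\vxi = t^{-1/\beta}\vect{\eta}$ in the inverse Fourier integral. The Jacobian contributes $t^{-n/\beta}$, the exponent $e^{-t||\vxi||^\beta}$ collapses to $e^{-||\vect{\eta}||^\beta}$, and the phase $\xv\cdot\vxi$ becomes $(t^{-1/\beta}\xv)\cdot\vect{\eta}$. The remaining integral is $f_\beta(1,\xv/t^{1/\beta})$ by the very definition above, which delivers $f_\beta(t,\xv) = t^{-n/\beta} f_\beta(1,\xv/t^{1/\beta})$.

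For the explicit Bessel formula at $t=1$, I would exploit the rotational invariance of $e^{-||\vxi||^\beta}$. Passing to spherical coordinates $\vxi = s\omega$ with $s\in[0,\infty)$ and $\omega\in\mathds{S}^{n-1}$, one has $\dd\vxi = s^{n-1}\dd s\, \dd\omega$, and the angular piece evaluates by the Bochner identity
\begin{equation}
\int_{\mathds{S}^{n-1}} e^{is\,\xv\cdot\omega}\dd\omega = \frac{(2\pi)^{n/2}}{(s\,||\xv||)^{(n-2)/2}} J_{(n-2)/2}(s||\xv||),
\end{equation}
which itself follows from the Gegenbauer expansion of a plane wave together with the Poisson integral representation of $J_\nu$. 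Substituting this in and combining the $(2\pi)^{-n}$ from Fourier inversion with the $(2\pi)^{n/2}$ from Bochner yields the prefactor $(2\pi)^{-n/2}||\xv||^{-(n-2)/2}$ and the one-dimensional Hankel integral stated in the theorem.

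Finally, the Gaussian case $\beta=2$ would be verified directly by completion of squares in the exponent: the inverse Fourier integral factors into $n$ one-dimensional Gaussian integrals and returns $f_2(t,\xv) = (4\pi t)^{-n/2}\exp(-||\xv||^2/(4t))$. Consistency with the Bessel representation is the classical Weber identity $\int_0^\infty e^{-s^2} s^{n/2} J_{(n-2)/2}(rs)\dd s = 2^{-n/2} r^{(n-2)/2} e^{-r^2/4}$, evaluated with $r=||\xv||/\sqrt{2}$ after rescaling. The main obstacle is the Bochner angular identity; since this is textbook Fourier analysis, and since the theorem itself is taken directly from Blumenthal and Getoor \cite{Blumenthal_1960}, the full argument may be cited rather than reproduced.
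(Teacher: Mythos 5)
Your proposal is correct, but it is worth noting that the paper does not actually prove this theorem at all: its ``proof'' is simply the citation to Blumenthal and Getoor \cite{Blumenthal_1960}, which is also the fallback you offer in your final sentence. What you supply in addition is the standard self-contained argument, and it checks out: Fourier inversion of the $L^1$ radial symbol $e^{-t\|\vxi\|^\beta}$, the substitution $\vxi = t^{-1/\beta}\vect{\eta}$ for the scaling relation, the Bochner identity $\int_{\mathds{S}^{n-1}} e^{is\,\xv\cdot\omega}\,\dd\omega = (2\pi)^{n/2}(s\|\xv\|)^{-(n-2)/2}J_{(n-2)/2}(s\|\xv\|)$ to reduce the $n$-dimensional integral to the stated Hankel integral (the powers of $s$ and of $2\pi$ combine exactly as claimed), and the Weber integral for the Gaussian check. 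Your route buys a verifiable derivation accessible from textbook Fourier analysis, at the cost of leaving nonnegativity of $f_\beta$ for $0<\beta\le 2$ (the one place where the restriction on $\beta$ genuinely enters, via L\'evy--Khintchine/Bernstein-type arguments) to the literature --- which is a reasonable division, since the theorem as stated only asserts the two formulas. One small slip: in the $\beta=2$ consistency check, the Weber identity as you state it applies directly with $r=\|\xv\|$ and $p=1$, giving $f_2(1,\xv) = (4\pi)^{-n/2}e^{-\|\xv\|^2/4}$ in agreement with the direct Gaussian computation; no rescaling to $r=\|\xv\|/\sqrt{2}$ is needed, and inserting that value would actually break the match.
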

\begin{proof}
See Blumenthal and Getoor \cite{Blumenthal_1960}.
\end{proof}

Therefore the solution to equation \eqref{eq_final_finite_mean} is
\begin{equation}\label{eq_final_finite_mean_sol}
m^0(t,\xv) = f_2 \left(\frac{t}{D_2},\xv \right) * f_{1+\alpha} \left(\frac{t}{D_{1+\alpha}},\xv \right) \, ,
\end{equation}
where $*$ represents the convolution operator and
\begin{equation}
D_2 =  \frac{\mu \psi_d}{(1 - \psi_d )}\frac{S_h^2}{n} \quad\text{and}\quad D_{1 + \alpha} = \frac{\gamma S^{\alpha + 1 }   }{\mu   A_{n-1} } \Upsilon_d(\alpha)  \, . \label{eq:diffcoeffs}
\end{equation}

We note that if $D_{1 + \alpha}$ has an interior maximum (i.e. there exists some $\alpha^*\in(0,1)$ such that
$\frac{\dd}{\dd \alpha}\left[ D_{1 + \alpha}  \right]_{\alpha = \alpha^*} = 0$ and $\frac{\dd^2}{\dd \alpha^2}\left[ D_{1 + \alpha}  \right]_{\alpha = \alpha^*} < 0$), then by examining equation \eqref{eq_final_finite_mean} in Fourier space, one can optimally reduce the order 1 modes. While $\Upsilon_d(\alpha)$ given by equation \eqref{eq_I_d_alpha} will have an interior maximum, one would need to choose a particular form of $f_\tau$ in such a way that $D_{1 + \alpha}$ has an interior maximum.

\section{Numerical simulations}\label{sec_numerics}

To demonstrate the validity of equation \eqref{eq_final_finite_mean}
and its solution equation \eqref{eq_final_finite_mean_sol}, we
reconstruct $m^0(t,\xv)$ from simulations of Algorithm
\ref{algo_vj_no_rests} defined in Appendix \ref{app_gill}. In
Figs.~2--3, we plot the distribution of sample paths released from the
origin in two dimensions for varying $D_2$ and $D_{1 + \alpha}$. As
the primary contribution of this paper is analysis, and there are many
parameters involved with our model, we avoid carrying out a detailed
numerical study. To obtain a sufficient match between simulation and
the analytic expression for $m^0$, $10^4$ simulations of Algorithm
\ref{algo_vj_no_rests} were carried out. At time $T_{\text{end}} =
10^5$, the simulations are stopped. The resulting sample paths are
binned into $300\times 300$ boxes to recreate the distribution shown
in Fig.~2; a cross section is also shown to show the close match
between distributions in Fig. 3.

The underlying choices made in running distribution $f_\tau$ and turning kernel $T$ are as follows. The running distribution $f_\tau$ is given by equation \eqref{eq_f_tau_pareto} for $\tau_0 = 1/2$ and $\beta = 3/2$ (so $\alpha = 1/2$), in which case $\mu = \langle \tau \rangle = 1$. In two dimensions we write the turning kernel $T$ as $T(\vect{v},\vect{v}') = g(\theta, \theta')h(s)/s$ and we choose
\begin{equation}
g( \theta, \theta') = \frac{e^{\kappa \cos(\theta - \theta')}}{2\pi I_0(\kappa)},\text{ and } h(s) = \de(s - s^*), \label{eq:lwconstraint}
\end{equation}
and we specify that $s^* = 1$. The angle change distribution $g$ is a von Mises distribution; the index of persistence $\psi_d$ is then $\psi_d =  I_1(\kappa) /  I_0(\kappa)$. To vary the contribution from Gaussian and fractional part of the analytic solution, we run the simulation twice, once with $\kappa = 0$ (so $D_2 = 0$ and $D_{1 + \alpha} \approx 0.493$) so to highlight the non-Gaussian nature of $m^0$, and once with $\kappa = 10$ (so $D_2 \approx 9.228$ and $D_{1 + \alpha} \approx 0.493$) for a more Gaussian-like solution. We see that we clearly obtain a close match between the simulation and analytic solution curves. As the number of simulations increases, the curves become indistinguishable by eye.

\begin{figure}
\begin{center}
\epsfxsize=4cm
\subfigure{\epsfbox{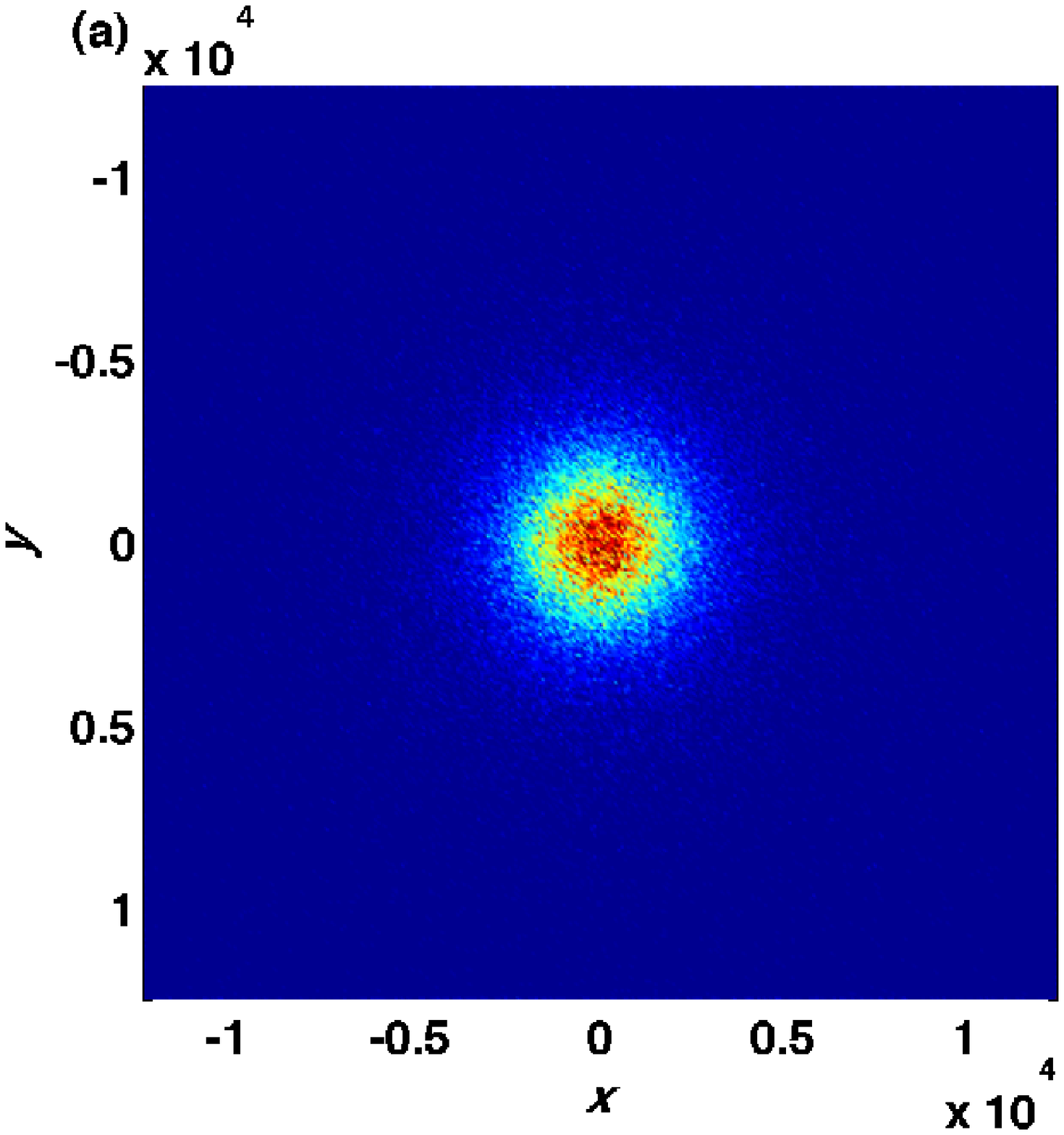}}
\epsfxsize=4cm
\subfigure{\epsfbox{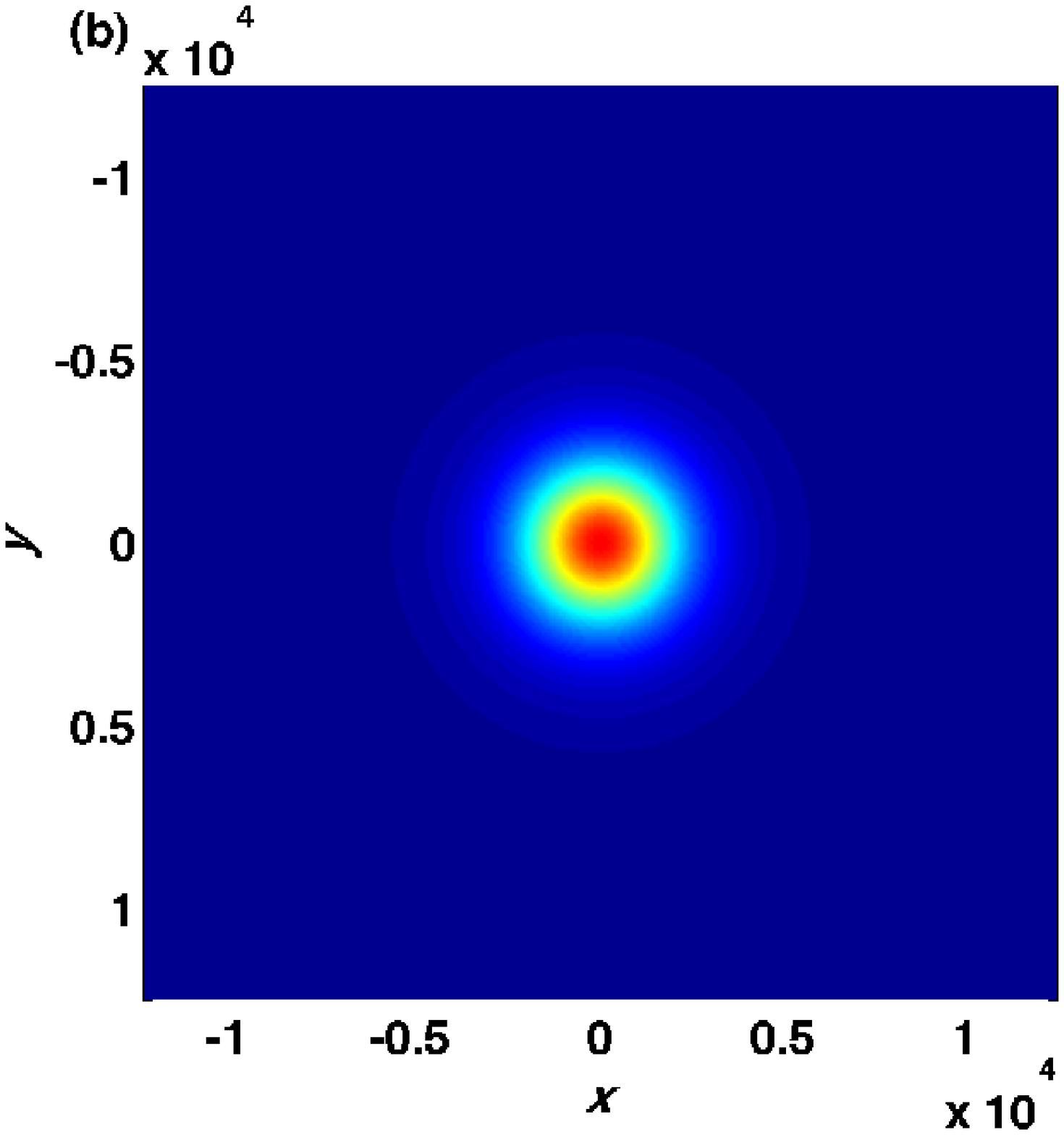}}\\
\epsfxsize=4cm
\subfigure{\epsfbox{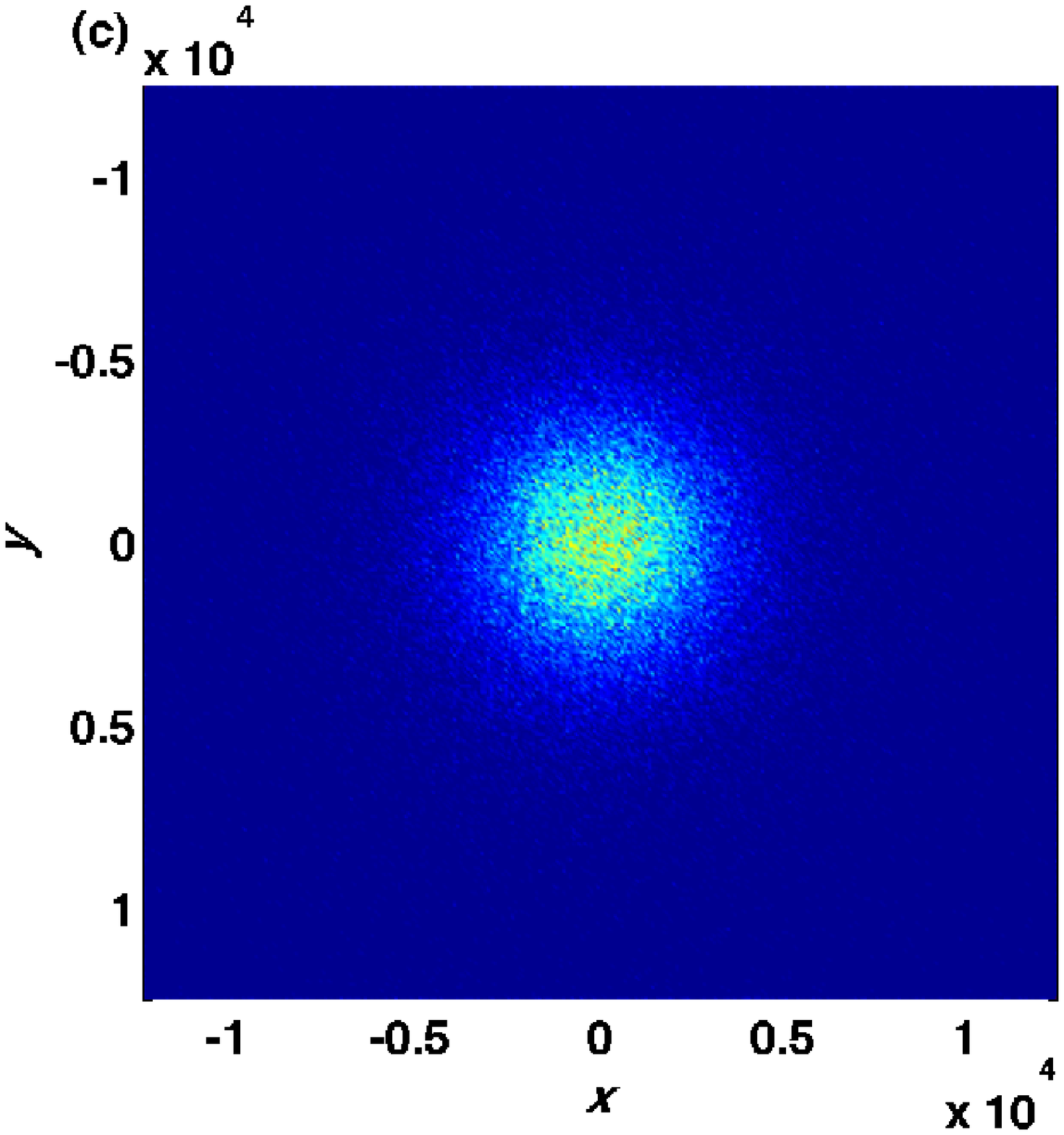}}
\epsfxsize=4cm
\subfigure{\epsfbox{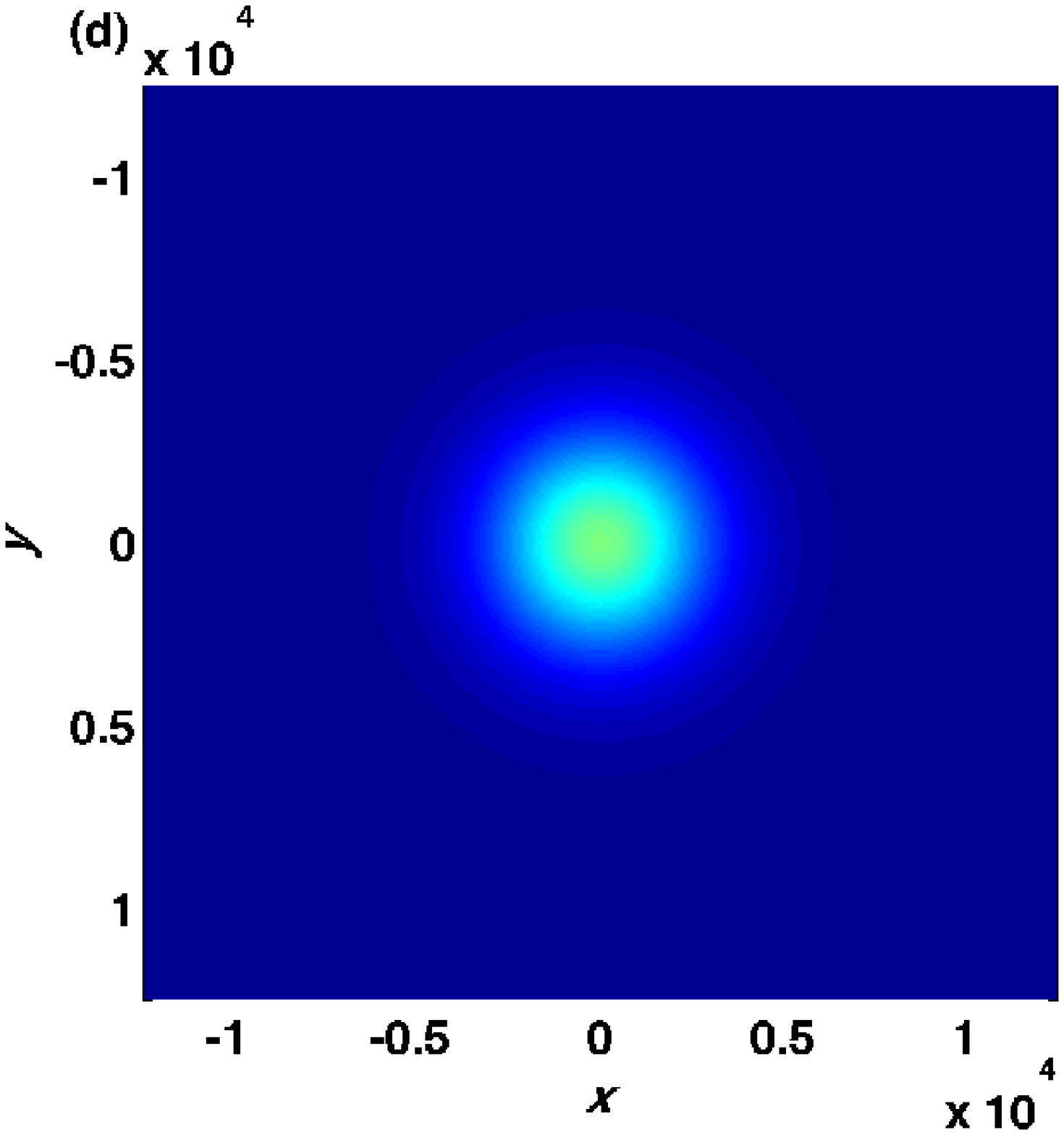}}
  \end{center}
  \caption{Comparison of simulations of Algorithm
    \ref{algo_vj_no_rests}, see Appendix \ref{app_gill}, and analytic
    solution to $m^0(t,\xv)$ given by equation
    \eqref{eq_final_finite_mean_sol}. Top row: case where $\kappa =0
    $, (a) simulation; (b) analytic solution. Bottom row: case where
    $\kappa =10$, (c) simulation; (d) analytic solution. Full details
    in main text.}
  \label{fig_sim_analysis_comp_1}
 \end{figure}

  \begin{figure}
  \begin{center}
\includegraphics[scale=.4]{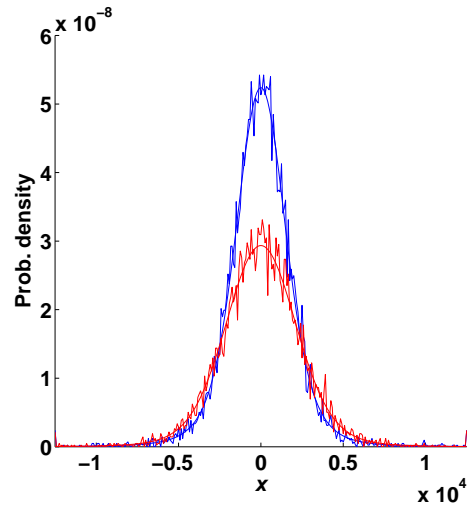}
 \end{center}
  \caption{Cross section of $m^0(t,\xv)$ for varying $\kappa$. For $\kappa = 0$, the smooth blue curve shows the analytic solution for $m^0$; the jagged blue curve shows the simulation constructed $m^0$. For $\kappa = 10$, the smooth red curve shows the analytic solution for $m^0$; the jagged red curve shows the simulation constructed $m^0$. Full details in main text.}
  \label{fig_sim_analysis_comp_2}
\end{figure}

\section{Discussion and conclusions}\label{sec_conc}

In this paper we have seen that for a generalised VJP, if we send the
second moment of the running distribution to infinity, in the limit of
large time we approximately obtain a fractional diffusion equation.
This is particularly interesting, since for a running distribution with
finite mean $\mu$, and variance $\sigma^2$, as detailed in
\cite{TaylorKing_2015}, we obtain an effective diffusion equation in
the large time limit with coefficient
\begin{equation}\label{eq_diff_const_fin_mean_fin_var}
D_{\text{eff}}= \frac{S_h^2 \mu}{n}  \left[ \frac{1}{1 - \psi_d} + \frac{1}{2}\left(\frac{\sigma^2}{\mu^2} - 1\right) \right] . 
\end{equation}
When the running time distribution is exponentially distributed, the
second term in the square brackets is identically zero. Therefore we
can view our diffusion constant as the contribution from the
exponential component of the running time distribution, plus an
additional (non-Markovian) term for non-exponential running times.
When considering equation \eqref{eq_final_finite_mean}, we see
essentially a Markov process, plus a non-Markov fractional correction
term. Additionally for equation \eqref{eq_final_finite_mean}, one
finds that in the limit as $\alpha\to1$, the diffusion equation is
recovered with diffusion constant given by equation
\eqref{eq_diff_const_fin_mean_fin_var}.

While our theory thus captures non-Markovian memory effects, it does
not carry through the finite velocity constraint of LWs as
implemented, e.g., by equation \eqref{eq:lwconstraint} to our
fractional diffusion equation \eqref{eq_final_finite_mean}. This
implies that its solution, the PDF equation
\eqref{eq_final_finite_mean_sol}, does not reproduce the ballistic
fronts and respective cut-offs of LW PDFs \cite{ZDK15}, see also
Figs.~2--3. Consequently it generates infinite second moments as for
L\'{e}vy flights. This is due to the expansion leading from equation
\eqref{eq_expand_fin_mean_inf_var} to equation
\eqref{eq_BinThm_expansion}, which is necessary in order to perform an
inverse Fourier-Laplace transform, as well as the Cattaneo
approximation. Fig. 3 suggests that for large times these fronts are
approximately negligible for reproducing the overall shape of LW PDFs.
Capturing them would necessitate generalising our theory to include
higher order moment equations, and avoiding the use of the Binomial
expansion in equation \eqref{eq_expand_fin_mean_inf_var}, which we
  performed to achieve analytical tractability.  However, the focus
of our present theory is not on these fronts but rather on the novel
type of intermittency emerging from LWs with memory, and the
possibility to go to higher dimensions.

One-dimensional versions of equation \eqref{eq_final_finite_mean}
  consisting of a sum of a conventional Laplacian modeling Brownian
  motion plus a fractional one reproducing L\'{e}vy flights have been
  written down \emph{ad hoc} in previous literature based on physical
  reasoning. This demonstrates the physical significance of our
  derivation and its result: Lomholt et al.\ \cite{LAM05} proposed an
equation of this type in order to model the optimal target search on a
fast-folding polymer chain by an ensemble of proteins. Here the
conventional Laplacian reproduced the one-dimensional diffusive
sliding of proteins or enzymes along the DNA chain while the
fractional Laplacian captured the intersegmental transfers, or jumps,
at chain contact points due to polymer looping. A generalised version
of this model was considered by Lomholt et al.\ \cite{LKMK08} in order
to study an intermittent search process that switches between local
Brownian search events and L\'{e}vy relocation times. On a purely
mathematical level, equations of this type form a subclass of
distributed-order fractional diffusion equations; see
Ref.~\cite{CSK11}, equations (39), (41) for modelling a diffusion
equation with a superposition of two (fractional) Laplacians, and its
solution equation (42) in terms of a convolution.  We remark that this
type of intermittent dynamics is different from the one considered by
B\'enichou et al.\ \cite{BLMV11}: There it was explicitly
distinguished between a Brownian search mode during which a target
could be found, and relocation dynamics during which a searcher was
insensitive for any target search. This dynamics was modelled by a set
of two coupled equations, with a different one for each process.  In
that sense, equations like equation \eqref{eq_final_finite_mean} are
somewhat closer to the concept of composite Brownian motion
\cite{Benh99,PMC13}: This dynamics was designed to model the search of
a forager, or particle, in patchy environments \cite{ZoLi99}, where
inter- and intra-patch movements were defined by Brownian motion with
different mean step lengths. This stochastic motion was generalised by
Reynolds \cite{Reyn09} in terms of an adaptive LW encompassing
composite Brownian motion, where the Brownian inter-patch movements
were replaced by a LW. Note, however, that in all the above models the
intermittent dynamics is put in by hand as a sum of two different
stochastic processes reproducing local search and non-local relocation
events while our equation \eqref{eq_final_finite_mean} emerges from a
single auto-correlated LW: Interestingly, here the Brownian term
is due to short-range auto-correlations in the LW dynamics while
the L\'{e}vy term results from the power law jumps. Our
  mathematical derivation thus gives all terms in equations which are
  of the type of equation \eqref{eq_final_finite_mean} precise
  physical meaning.

  This is important especially in view of a number of recent
  experiments: For the dinoflagellate \emph{Oxyrrhis marina} Bartumeus
  et al.\ \cite{BPPMC03} reported a switching between L\'{e}vy and
  Brownian search strategies depending on the density of its prey
  distribution. Similar results were obtained for coastal jellyfish
  \emph{(Rhizostoma octopus)} \cite{HBDF12}.  Movement patterns of
  crawling mussels \emph{(Mytilus edulis)} shifted from L\'{e}vy
  towards Brownian motion with increasing mussel density, where the
  Brownian motion emerged from frequent encounters between the mussels
  in dense environments \cite{dJBKW13}. Such type of intermittency can
  also be generated by a predator due to spatio--temporal sampling of
  prey in different environments: Sims et al.\ found a switching
  between L\'{e}vy and Brownian search patterns for a variety of
  free-ranging marine predators, where the animals were hunting either
  for sparse prey in deep ocean or for abundant prey close to the more
  productive shelf-edge \cite{SHBB12,Sims10}. In all these works the
  experimental data was analysed in view of either Brownian or
  L\'{e}vy dynamics but not by a superposition of both. Our new
  diffusion equation \eqref{eq_final_finite_mean} allows for the
  latter analysis by shedding light on the origin of this dynamics in
  terms of correlated LWs. Important for experimental applications is
  also that the (generalised) diffusion coefficients equation
\eqref{eq:diffcoeffs} quantifying this dynamics can be extracted from
measuring the PDFs for speed, running time, and turning angle.  That
our approach explicitly includes correlations is especially promising
for describing the movement of biological organisms, where memory
often matters \cite{LICCK12,LCK13}.  Along these lines it would be
interesting to derive a fractional Klein-Kramers equation that is more
general than equation \eqref{p_forward_no_rests} by containing
long-range correlations beyond two successive steps.  Following our
approach, one may then try to extract a fractional diffusion equation
for a long-range correlated L\'{e}vy walk. In terms of more general
applications, we note that an intermittent switching between L\'{e}vy
and Brownian search may be advantageous to optimise the random search
of a mobile robot for adapting efficiently under changing target
density \cite{NMNS10}.

\begin{acknowledgments}
  We would like to thank Thomas Hillen and Jon Chapman for helpful
  discussions.  J.P.T-K.\ received funding from the EPSRC under grant
  reference number EP/G037280/1. S.F.\ acknowledges the support of the
  EPSRC Grant No. EP/J019526/1, R.K.\ thanks the Office of Naval
  Research Global for finanical support.
\end{acknowledgments}


\appendix

\section{Gillespie algorithm}\label{app_gill}

Here we give a simple Gillespie algorithm \cite{Gillespie_1977} for
generating a sample path up until time $T_{\text{end}} > 0$. It
should be noted that the sample path will need to be truncated as the
algorithm generates positions beyond $T_{\text{end}}$.

\begin{algorithm}[H]
\DontPrintSemicolon
 \KwData{Initialise time $t=0$, starting position at $\vect{x}(t=0) = \vect{x}_0$ and starting velocity at $\vect{v}(t=0) =\vect{v}_0$.}
 Assume particle has just initiated a running state. \;
 \While{$t < T_{\text{end}}$}{
  Sample time spent running $\tau\sim f_{\tau}(t)$.\;
  Update position: $\vect{x}(t+\tau) \leftarrow \vect{x}(t) + \tau\vect{v}(t)$.\;
  Sample new velocity for next running phase: $\vect{v}(t+\tau)\sim T(\cdot, \vect{v}(t))$.\;
  Update time $t \leftarrow t + \tau $.\;
  }
 \caption{Algorithm to generate a single generalised VJ sample path without rests.}
 \label{algo_vj_no_rests}
\end{algorithm}

\section{Alternative derivation}\label{app_deriv}

We present now an alternative way to derive the main governing
equation \eqref{p_forward_no_rests} for the density $p(t,\xv,\vv)$,
using Alt's structural approach \cite{Alt_1980}. One can consider the
motion of an individual (bacteria, cell, etc.) that runs with the
velocity $\vv$ during the run time $\tau$ and stops at $(t,\xv)$ with
given probability $\beta _{r}(\tau)$ per unit time. We define the mean
structural density of individuals, $\sigma (t,\xv,\vv,\tau )$, at
point $\xv$ and time $t$ that move with the velocity $\vv$ and having
started the move $\tau$ units of time ago. The governing equation for
$\sigma (t,\xv,\vv,\tau)$ takes the form \cite{Alt_1980}
\begin{equation}
\frac{\partial \sigma }{\partial t}+\vv \cdot \nabla_{\xv} \sigma +\frac{\partial \sigma }{\partial \tau }=-\beta _{r}(\tau )\sigma  \, .  \label{eq_str}
\end{equation}
We assume that at the initial time $t=0$ all individuals have zero running
time
\begin{equation}
\sigma (0,\xv,\vv,\tau )=p_{0}(\xv,\vv)\delta (\tau ) \, ,  \label{eq_initial}
\end{equation}
where $p_{0}(\xv,\vv)$ is the initial density. Our purpose is
to obtain the master equation for the density
\begin{equation}
p(t,\xv,\vv)=\int_{0}^{t}\sigma (t,\xv,\vv,\tau)\dd\tau \, .  \label{eq_den}
\end{equation}
We set up the boundary condition at zero running time $\tau =0$:
\begin{equation}
\sigma (t,\xv,\vv,0)=\int_{0}^{t}\beta _{r}(\tau)\int_{V}T\left( \vv,\vv' \right) \sigma (t, \xv,\vv' ,\tau )\dd\vv' \dd\tau \, .  \label{eq_in0}
\end{equation}
The master equation for $p(t,\xv,\vv)$ can be found by
differentiating equation \eqref{eq_den} with respect to time $t$
\begin{equation}
\frac{\partial p}{\partial t}+\vv \cdot \nabla_{\xv} p=-i(t,\xv, \vv)+j(t,\xv,\vv) \, ,  \label{eq_eq1}
\end{equation}
where the switching terms are
\begin{equation}
i(t,\xv,\vv)=\int_{0}^{t}\beta _{r}(\tau )\sigma (t,\xv,\vv,\tau )\dd\tau \, ,\quad j(t,\xv,\vv)=\sigma (t,\xv,\vv,0) \, .  \label{eq_sw}
\end{equation}
By using the method of characteristics we find from equation \eqref{eq_str} for $\tau < t$
\begin{equation}
\sigma (t,\xv,\vv,\tau )=\sigma (t-\tau ,(\xv-\vv)\tau,\vv,0) \exp \left\{ -\int_{0}^{\tau }\beta _{r}(s)\dd s \right\}  .  \label{eq_solution1}
\end{equation}
The exponential factor in the above formula is the survival function
\begin{equation}
F_{\tau }(t)=\exp \left\{ {-\int_{0}^{t}\beta _{r}(s)\dd s} \right\} \, .  \label{eq_Sur}
\end{equation}
By using equation \eqref{eq_solution1} and the Laplace transform technique \cite{Fedotov_2013, Fedotov_2015,Straka_2015}, we find the expressions for the switching terms:
\begin{equation}
i(t,\xv,\vv)=\int_{0}^{t}\Phi _{\tau }(t-s)p(s,\xv-\vv(t-s),\vv)\dd s \, ,  \label{eq_i}
\end{equation}
\begin{equation}
j(t,\xv,\vv)=\int_{0}^{t}\Phi _{\tau }(t-s)\int_{V}T\left(
\vv,\vv' \right) p(s, \xv - (t-s)\vv',\vv' )\dd\vv' \dd s \, .  \label{eq_j}
\end{equation}
The main advantage of the present derivation is that it can be easily
extended for the nonlinear case \cite{Fedotov_2013}. Superdiffusive equations can be
obtained for the following rate \cite{Fedotov_2013_b, Ferrari_2001}
\begin{equation}
\beta _{r}(t)=\frac{\alpha }{\tau _{0}+t},\ 0<\alpha <2\, .  \label{eq_inverse}
\end{equation}
The rate equation \eqref{eq_inverse} leads to a power law (Pareto) survival function
\begin{equation}
F_{\tau }(t)=\left[ \frac{\tau _{0}}{\tau _{0}+t}\right] ^{\alpha } \, ,
\label{eq_Pareto}
\end{equation}
and corresponding running time PDF
\begin{equation}
f_{\tau }(t)=\frac{\alpha \tau _{0}^{\alpha }}{(\tau _{0}+t)^{1+\alpha }} \, .
\label{eq:psi_tails}
\end{equation}

\section{Laplace space expansion of Pareto power law distribution}\label{app_Pareto}

Consider the Pareto power law distribution with parameters $\tau_0$ and $\beta$. 
\begin{equation}\label{eq_f_tau_pareto}
f_\tau ( t )= \frac{\beta \tau _{0}^{\beta }}{(\tau _{0}+ t )^{1+\beta }} \, \iff \, \bar{f}_\tau (\lambda) =  \beta \left( \tau _{0}\lambda\right) ^{\beta }e^{\tau _{0} \lambda}\Gamma
\left( -\beta ,\tau _{0}\lambda \right)\, 
\end{equation}
where we used the incomplete gamma function $\Gamma (x,y) := \int_{y}^{\infty }t^{x-1}e^{-t} \dd t$. The mean and variance are both infinite for $0<\beta<1$, but the distribution has finite mean for $1<\beta<2$. Using the asymptotic expansion
\begin{equation}
\Gamma \left( - \beta ,y  \right) =-\frac{\Gamma (1-\beta )}{\beta }+ 
y^{-\beta }\beta^{-1}+\frac{ y ^{1 - \beta } }{1 - \beta }+... \quad \text{as }y\rightarrow 0\, ,
\end{equation}
we recover an expansion of the form given in equation \eqref{eq_f_tau_Laplace_exp_fin_mean} where $\mu = \tau_0/\alpha$ and $\gamma = -\tau_0^{1 + \alpha}\Gamma(-\alpha)$ for $\beta = 1 + \alpha$.

\section{Approximation of fractional term}\label{app_riesz}

By closing the set of moment equations as explained in
  Sec.~\ref{sec_cattaneo}, in the fractional case one obtains terms of
  the form $\int_V \vv (\vv \cdot \nabla_{\xv})^\beta p\, \dd \vv$. To
  evaluate these terms requires the use of the Riesz derivative, which
  we define first.

\begin{Definition}

In the Meerschaert's framework \cite{DOvidio_2014, Meerschaert_2004, Meerschaert_2006}, for scalar function $f:\mathds{R}^n \rightarrow \mathds{R}$ the multidimensional fractional derivative is given by
\begin{equation}\begin{aligned}
\nabla_M^\beta f(\xv) & = \int_{\vectornorm{\vect{\theta}} = 1} \vect{\theta} D_{\vect{\theta}}^\beta f(\xv) M(\dd \vect{\theta})\\ & = \int_{\vectornorm{\vect{\theta}} = 1} \vect{\theta} (\vect{\theta}  \cdot \nabla_{\xv})^\beta f(\xv) M(\dd \vect{\theta})\, ,  
\end{aligned}\end{equation}
for $\xv\in\mathds{R}^n$ and $\beta\in(0,1)$, where $\vect{\theta}= (\theta_1,...,\theta_n)$ is a unit column vector. We require that $M(\dd \vect{\theta})$ is positive finite measure, called a mixing measure. The term $(\vect{\theta}  \cdot \nabla_{\xv})^\beta$ is called the $\beta$ order fractional directional derivative given by
\begin{equation}
\mathscr{F} \left\{  (\vect{\theta}  \cdot \nabla_{\xv})^\beta f(\xv) \right\} =  (i\vect{\theta}  \cdot \vect{k})^\beta \tilde{f}(\vect{k}) \,  . 
\end{equation}
For the vector valued function $\vect{f}:\mathds{R}^n \rightarrow \mathds{R}$, we define the fractional gradient by
\begin{equation}\begin{aligned}
J_M^\beta \vect{f}(\xv) & = \int_{\vectornorm{\vect{\theta}} = 1} \vect{\theta} D_{\vect{\theta}}^\beta (\vect{\theta} \cdot \vect{f}(\xv) ) M(\dd \vect{\theta}) \\
& = \int_{\vectornorm{\vect{\theta}} = 1} \vect{\theta} (\vect{\theta}  \cdot \nabla_{\xv})^\beta  (\vect{\theta} \cdot \vect{f}(\xv) ) M(\dd \vect{\theta})\, ,
\end{aligned}\end{equation}
where $\xv\in\mathds{R}^n$ and $\beta\in(0,1)$. In the case when $M(\dd \vect{\theta}) = \text{const }\dd \vect{\theta}$, we get the Riesz derivative. For the remainder of this paper, we always assume this constant is identically one. We then define the fractional Laplacian for scalar function $f$ by
\begin{equation}\label{eq_Riesz_laplacian}
\mathds{D}^{1+ \beta}_M f(\xv) = \nabla_{\xv}  \cdot \nabla_M^{\beta}  f( \xv ) 
= \int_{\vectornorm{\vect{\theta}} = 1}  (\vect{\theta}  \cdot \nabla_{\xv})^{1 + \beta} f(\xv) M(\dd \vect{\theta}) \, .
\end{equation}
Additionally, for the vector valued function $\vect{f}$, we write
\begin{equation}
\mathds{D}^{1+ \beta}_M \vect{f}(\xv) = \nabla_{\xv}  \cdot J_M^{\beta}  \vect{f}( \xv ) 
 = \int_{\vectornorm{\vect{\theta}} = 1}  (\vect{\theta}  \cdot \nabla_{\xv})^{1 + \beta} (\vect{\theta} \cdot \vect{f}(\xv) ) M(\dd \vect{\theta}) \, .
\end{equation}

\end{Definition}

We now wish to evaluate terms of the form
\begin{equation}
\int_V \vv(\vv \cdot\nabla_{\xv })^{\alpha}p(t, \xv, \vv)\dd \vv \, .
\end{equation}
From equation \eqref{eq_u_min_form}, we gave a form for $u_{\text{min}}(t,\xv, \vv)$. This allows us to evaluate the fractional flux term as
\begin{align}\label{eq_throw_term_eval}
& \quad \int_V  \vv(\vv \cdot\nabla_{\xv })^{\alpha}u_{\text{min}}\dd \vv \\
=& \int_V \vv(\vv \cdot\nabla_{\xv })^{\alpha}\left[ \frac{m^0(t, \xv ) h(s)}{s^{n-1}A_{n-1}} + \frac{(\vect{m}^1(t, \xv ) \cdot \vv ) h(s)}{S_h^2s^{n-1}V_n} \right] \dd \vv \, ,  \nonumber 
\end{align}
Evaluating equation \eqref{eq_throw_term_eval} in polar/spherical coordinates, one obtains
\begin{align}
& \quad \int_V  \vv(\vv \cdot\nabla_{\xv })^{\alpha}u_{\text{min}}\dd \vv \\
&=\frac{1}{A_{n-1}} \int_0^\infty s^{1 + \alpha} h(s) \dd s \int_{\vectornorm{\vect{\theta}}=1}  \vect{\theta}(\vect{\theta} \cdot\nabla_{\xv })^{\alpha} m^0(t, \xv )  \dd\vect{\theta} \, ,\nonumber \\
&+ \frac{1}{S_h^2 V_n} \int_0^\infty s^{2+\alpha} h(s)  \dd s  \int_{\vectornorm{\vect{\theta}}=1}  \vect{\theta}(\vect{\theta} \cdot\nabla_{\xv })^{\alpha} (\vect{m}^1(t, \xv ) \cdot \vect{\theta} )  \dd\vect{\theta}   \, , \nonumber  \\
&\quad\quad = \frac{ S_h^{1 + \alpha }}{ A_{n-1}  }  \nabla_{M}^{\alpha} m^0 + \frac{S_h^{2 + \alpha} }{V_n S_h^2 } J^{\alpha}_M \vect{m}^1 \, .  \nonumber 
\end{align}

\section{Proof of Theorem \ref{Thm_relate_frac_deriv}}\label{app_proof}

We wish to prove
\begin{equation}
\mathds{D}^{1 + \alpha}_M f(\xv) \equiv \Upsilon_d(\alpha) \Delta^{\frac{1 + \alpha}{2}}_{\xv} f(\xv) \, ,
\end{equation}
for $M(\dd \vect{\theta}) = \dd \vect{\theta}$. We first consider the left hand side. In Fourier space, this is
\begin{equation}\label{eq_pr_ln_1}
\mathds{D}^{1+ \beta}_M \tilde{f}(\kv)  = \int_{\vectornorm{\vect{\theta}} = 1}  (i \vect{\theta}  \cdot \kv )^{1 + \beta} \tilde{f}(\kv) \dd \vect{\theta} \, .
\end{equation}

In the two-dimensional case, rewriting equation \eqref{eq_pr_ln_1}, using the substitution $\kv = || \kv || (\cos \psi, \sin\psi)$ [for $\psi\in[0,2\pi)$] and using polar coordinates, we identify $\Upsilon_2(\alpha)$ as
\begin{equation}\label{eq_pr_ln_2}
\Upsilon_2(\alpha) = -\int_0^{2\pi} \left\{ 
i
\left(\begin{array}{c}
\cos \psi \\ \sin \psi
\end{array}\right) \cdot
\left(\begin{array}{c}
\cos \theta \\ \sin \theta
\end{array}\right)
 \right\}^{1 + \alpha} \dd\theta \, .
\end{equation}
The negative sign appears from the definition of the fractional Laplacian [given in equation \eqref{eq_def_frac_Laplace}]. Using trigonometric identities, we manipulate equation \eqref{eq_pr_ln_2}, finding
\begin{equation}\label{eq_pr_ln_3}\begin{aligned}
\Upsilon_2(\alpha) & = - \int_0^{2\pi} \left\{ i \cos(\theta - \psi) \right\}^{1 + \alpha} \dd\theta \\
& = -2\left[ (i)^{1 + \alpha} + (-i)^{1 + \alpha} \right] \int_0^{\pi/2} \left\{ \cos(\eta) \right\}^{1 + \alpha} \dd \eta \\
& = 4\sin\left( \frac{\alpha \pi}{2}\right)   \left\{ \frac{\sqrt{\pi}}{2}  \frac{ \Gamma (\frac{2 + \alpha}{2})}{\Gamma (\frac{3 + \alpha}{2})}  \right\} \, ,
\end{aligned}\end{equation}
and the two-dimensional case is proved.

For the three-dimensional case, we wish to evaluate an integral similar to equation \eqref{eq_pr_ln_2}. Using the representation $\kv = || \kv || (\sin\psi_1  \cos\psi_2 , \sin\psi_1  \sin\psi_2 , \cos\psi_1)$ [for $\psi_1\in(0,\pi)$, $\psi_2\in(0,2\pi)$], and using spherical coordinates this integral can be evaluated similar to that of $\Upsilon_2(\alpha)$.

\end{document}